\newcommand{\abs}[1]{\left|#1\right|}
\newcommand{\E}[1]{\mathbb{E}\left[#1\right]}
\title{Bidding Strategies for Proportional Representation in Advertisement Campaigns}
\author{Inbal Livni Navon}{Stanford University, USA}{inballn@stanford.edu}{https://orcid.org/0000-0001-5949-316X}{Supported by the Sloan Foundation Grant 2020-13941 and the Zuckerman STEM Leadership Program.}
\author{Charlotte {Peale}}{Stanford University, United States}{cpeale@stanford.edu}{https://orcid.org/
0000-0002-9959-857X}{Supported by the Simons Foundation Collaboration on the Theory of Algorithmic Fairness.}
\author{Omer {Reingold}}{Stanford University, United States}{reingold@cs.stanford.edu}{https://orcid.org/
0000-0003-4997-1716}{Supported by the Simons Foundation Investigators Award 689988.}
\author{Judy Hanwen {Shen}}{Stanford University, United States}{jhshen@cs.stanford.edu}{https://orcid.org/0000-0002-7864-5242}{Supported by the Simons Foundation Collaboration on the Theory of Algorithmic Fairness.}
\authorrunning{I. Livni Navon et al. } 
\keywords{Algorithmic fairness, diversity, advertisement auctions} 
\date{}
\begin{document}
\maketitle
\begin{abstract}
Many companies rely on advertising platforms such as Google, Facebook, or Instagram to recruit a large and diverse applicant pool for job openings. Prior works have shown that equitable bidding may not result in equitable outcomes due to heterogeneous levels of competition for different types of individuals. Suggestions have been made to address this problem via revisions to the advertising platform. However, it may be challenging to convince platforms to undergo a costly re-vamp of their system, and in addition it might not offer the flexibility necessary to capture the many types of fairness notions and other constraints that advertisers would like to ensure. Instead, we consider alterations that make no change to the platform mechanism and instead change the bidding strategies used by advertisers. We compare two natural fairness objectives: one in which the advertisers must treat groups equally when bidding in order to achieve a yield with group-parity guarantees, and another in which the bids are not constrained and only the yield must satisfy parity constraints. We show that requiring parity with respect to both bids and yield can result in an arbitrarily large decrease in efficiency compared to requiring equal yield proportions alone. We find that autobidding is a natural way to realize this latter objective and show how existing work in this area can be extended to provide efficient bidding strategies that provide high utility while satisfying group parity constraints as well as deterministic and randomized rounding techniques to uphold these guarantees. Finally, we demonstrate the effectiveness of our proposed solutions on data adapted from a real-world employment dataset.   
\end{abstract}

\pagebreak
\section{Introduction}

For many institutions, hiring a diverse workforce is crucial to achieving and retaining an equitable environment. While there are many strategies that can be employed to ensure that each stage of the hiring process, from initial resume screening to a final hiring decision, can be done equitably, even the best of attempts may fall short if the initial pool of applicants lacks sufficient diversity. As a result, many companies rely on online advertising platforms such as Google, Facebook, or Instagram to recruit a wider applicant pool for job openings. 

Advertising platforms sell slots to advertisers through auction mechanisms. Toward the goal of yielding a diverse applicant pool, advertisers are able to create recruitment and marketing campaigns to target users of different demographic groups. This specific but salient setting of job advertisements is bound by policy oversight from different government entities. In the United States, the Equal Employment Opportunity Commission enforces discrimination laws that prohibit employers from ``\textit{publishing a job advertisement that shows a preference for or discourages someone from applying for a job based on his or her race, color, religion, sex, national origin, age, disability or genetic information}"\footnote{\url{https://www.eeoc.gov/prohibited-employment-policiespractices}}. However, it is unclear whether this guidance refers to bidding equally on individuals from different demographics, or to achieving a proportional yield for all demographics regardless of protected class status. 

Prior work has observed that these two goals may not be equivalent. Due to differences in advertiser demand, the required costs to reach users of various demographic groups can be very different; women in particular may see fewer job ads due to competition from retail brands that do not target men \cite{lambrecht2019algorithmic}. As a result, setting the same bid value for all groups may still result in a disproportionate representation in downstream yield when there are different levels of competition for different groups of users on the platform \cite{dwork2018fairness, chawla2019multi}. 
Existing work (such as~\cite{chawla2019multi, chawla2022individually} with more discussed in Section~\ref{sec:lit-rev}) interprets this behavior as a failure of the mechanism due to composition, and suggests ways that ad auctions could be redesigned to guarantee fair outcomes despite these composition effects. 

However, an advertising platform may be unlikely to implement a new auction mechanism for a number of reasons, even if the revenue of the new alternative can be shown to be close to that of the original mechanism. For instance, the costs necessary to research, deploy, and completely redesign the current auction system may make such a change undesirable. Moreover, a new auction might make the mechanism far more complex and difficult for advertisers to understand as well as offer less flexibility if it is designed with only a few specific types of constraints and objectives in mind. We discuss these concerns in more detail in Section~\ref{sec:arguments}.

Instead, we take the perspective that perhaps only requiring advertisers to bid values that are similar across different groups of interest may not be the most useful requirement for this context if the fairness of the system is judged by the outcomes of the auctions, and not the bids that are inputted. In fact, in Section~\ref{sec:utility-gaps}, we demonstrate that requiring advertiser's bids to be similar across groups may actually \emph{hinder} achieving parity with respect to auction outcomes, and show that the utility of the optimal bidding strategy that satisfies parity constraints at both the bid and outcome level can be far lower than the optimal bidding strategy that requires group parity at the outcome level alone.

We use these arguments and examples to motivate an alternative approach to redesigning auctions, which is to consider the perspective of an individual advertiser and design bidding strategies that guarantee outcomes that meet the advertiser's goals. This approach is often referred to as an ``autobidder,'' and the adoption of such technologies as a way to control spending and budget depletion is growing increasingly popular. There are a few works that consider autobidding for group parity goals~\cite{celli2022parity, nasr2020bidding}, but these approaches do not give formal guarantees about how closely the resulting bidding strategies meet the desired parity constraints.

We argue that due to their flexibility, practicality, and ease of implementation in existing systems, autobidding strategies that guarantee proportional representation across key subgroups are a key direction for research in equitable online ads. In this paper, we show how we can build on the autobidding framework of Aggarwal et al.~\cite{aggarwal2019autobidding} to develop an efficient algorithm to compute bidding strategies with provable proportional group representation guarantees in the offline setting. We additionally show how our constraints fit into the model studied by Castiglioni et al.~\cite{castiglioni2022unifying} to provide efficient online bidding algorithms with sublinear regret. We focus on strategies for a single autobidder, though understanding market dynamics when many autobidders with fairness constraints are deployed is a natural next step for future work. 

We supplement our arguments and algorithms with empirical evidence using data modified from the American Community Survey. By comparing single bid, gender-based bids, and our autobidder, we see that the autobidder achieves the best combination of utility and representation across different job sectors with different levels of representation.

\subsection{Contributions}
    \paragraph*{Motivations and Fairness Notions} Through examples and qualitative analysis, we consider different potential notions of fairness and group representation for ad auctions and make the case for using autobidders to achieve equitable ad exposure (Sections~\ref{sec:arguments} and \ref{sec:model}). 
    \paragraph*{Optimal Randomized Bidding Strategy for Budget and Group Representation Constraints} Building on the autobidder with constraints framework suggested in \cite{aggarwal2019autobidding}, we add constraints on the representation of key subgroups in the set of clicks resulting from a series of auctions and show that it is possible to calculate an approximately optimal bidding strategy that stays within budget while satisfying group representation constraints in expectation. We consider two platform revenue schemes:  one in which advertisers only pay for clicks and another where advertisers pay for impressions (Section~\ref{sec:offline-alg}).

    \paragraph*{Bidding Strategy with Deterministic Constraint Guarantees} Our model assumes that each individual $i$ clicks on an ad with some probability $ctr_i$, therefore there is inherent randomness in the outcome, and it is not possible to have a deterministic promise on any of the constraints. However, we give a modification to the algorithm that results in slightly lower utility, but satisfies the constraints with high probability, and not only in expectation. The randomized rounding method assumes that there exists a large fraction of the population without representational constraints (Section~\ref{sec:rounding}). 

    \paragraph*{Rounding for Deterministic Solutions When Groups are Disjoint} In the special case of disjoint groups and constraints on group representation with respect to impressions (rather than clicks), we show how to achieve a deterministic solution with utility that is close to optimal. This solution works also for a small number of intersection groups.
    For intersecting groups, it is possible to use the randomized rounding described above, that promises the constraints are met with high probability (Section~\ref{sec:rounding}). 

    \paragraph*{Extension to Autobidding with Representation Constraints in the Online Setting} We show that our constraints can be satisfied by an online algorithm with sublinear regret using the online learning framework of Castiglioni et al.~\cite{castiglioni2022unifying} (Section~\ref{sec:online-ext}).
    \paragraph*{Empirical Data on Autobidder Performance} Using data adapted from the American Community Survey and the US Bureau of Labor Statistics, we simulate the results of different bidding strategies. We show the advantage of our proposed autobidder with proportional representation constraints and randomized rounding for achieving both equitable exposure and high advertiser utility (Section~\ref{sec:experiments}). 

\section{Related Work}\label{sec:lit-rev}
\subsection{Mechanism Design}

A number of existing works consider ways to design auctions that satisfy different choices of fairness guarantees. Chawla et al.~\cite{chawla_et_al:LIPIcs.ITCS.2022.42} design truthful auctions that guarantee individually fair outcomes, Celis et al. \cite{celis2019toward} incorporate group parity constraints into an auction mechanism, and Kuo et al.~\cite{kuo2020proportionnet} propose a deep learning approach to approximately optimal auctions while incorporating relaxed individual fairness constraints. Dwork et al.~\cite{dwork2018fairness} observe that even when advertisers place bids that fulfill their personal fairness goals, competition from other advertisers may prevent the auction outcomes from satisfying advertisers’ fairness constraints. Building on this observation, subsequent works design auction mechanisms whose outcomes satisfy individual fairness constraints for each advertiser, assuming that advertisers’ bids satisfy individual fairness guarantees with respect to their personal metrics \cite{chawla2019multi, chawla2022individually}.

\subsection{Autobidding Strategies}

A different approach for achieving advertising auction goals is to consider the problem from the point of view of the advertisers (bidders) and design bidding strategies that guarantee the desired properties. This approach is often termed ``autobidding''.  While a number of works explore autobidding strategies for a number of different budgets and spending-related goals~\cite{balseiro2019learning, borgs2007dynamics, conitzerMultiplicative2022, lucier2023autobidders, aggarwal2019autobidding}, autobidding strategies that optimize for fairness guarantees are still relatively under explored. Nasr et al.~\cite{nasr2020bidding} first suggest adding parity constraints specifically to bidding strategies using an unlimited budget. Celli et al.~\cite{celli2022parity} explore autobidding strategies that incorporate parity constraints via a regularization term in the objective function. However, this approach does not allow for any formal guarantees about how well these fairness goals are achieved by the algorithm. 
More recently, Castiglioni et al.~\cite{castiglioni2022unifying} consider more general autobidding strategies that can handle a number of different types of constraints. While they do not consider fairness guarantees in their main results, they note that fairness constraints could be considered as a future direction and suggest one potential formulation of fairness constraints. 
Unfortunately, the fairness constraints they suggest are not proven to yield an efficient algorithm because the solutions are not guaranteed to be feasible. We show that our constraints do satisfy feasibility requirements and yield an efficient online autobidding algorithm.

Another related line of work considers how to "learn to bid" or how to discover bidding strategies using feedback from the outcomes of repeated auctions~\cite{ltb-balseiro2019contextual, ltb-feng18, ltb-han2020learning, ltb-han2020optimal, ltb-varadaraja23, ltb-weed16}. Our work mostly considers the offline full information setting in which the winning bids, items, and values are all known to the autobidding algorithm. In Section \ref{sec:online-ext}, we note that the online learning framework developed by Castiglioni et. al in \cite{castiglioni2022unifying} can be extended to give online bidding algorithms that guarantee group proportionality constraints will be satisfied in the long-term when slots and their associated values are drawn from a stationary distribution at each step. However, further exploring how to learn fairness-aware bidding strategies in the repeated auction setting is an interesting direction for future work.

\section{Motivation}\label{sec:arguments}
We consider a two-part system, in which advertisers place bids on individual ad slots according to some bidding algorithm and a centralized auction mechanism decides which advertiser gets a particular slot, and how much they will pay. In reality, platforms like Google will often perform both parts of this process once an advertiser defines a campaign with a target population, total budget ($B$), and potentially some additional desiderata. Currently, most ad platforms use a standard second- or first-price auction to decide how ads are allocated, though there have been proposals for alternative options that guarantee a variety of different fairness objectives (See Section~\ref{sec:lit-rev}).

We observe that there are a number of different reasons why it makes sense to focus on designing a new bidding algorithm rather than implementing alterations to the auction mechanism itself when the system contains bidders who would like to ensure their ads result in an even spread of clicks from their target population. 
\paragraph*{Cost to Platform} Most alternative options do not consider the potential loss in revenue for the platform that would arise from implementing a new auction mechanism. Even when an alternative mechanism can provide near-optimal platform utility, significant costs associated with designing, implementing, and switching over to a new mechanism are likely to make such a switch impractical from the point of view of a platform. 
\paragraph*{Loss of Flexibility} There are many different objectives and constraints that advertisers would like to optimize for. While we focus on group representation constraints, some advertisers may be more focused on other objectives such as alternative notions of fairness, or goals outside fairness such as a limit on their rate of spending. Keeping the auction as a fixed mechanism allows advertisers to specify their own individual constraints and optimize their bids to match.  
\paragraph*{Decreased Comprehension} Ad platforms prioritize simplicity in their auction mechanisms. For this reason, many companies including Google have recently decided to switch from second- to first-price auctions, citing concerns around simplifying the ad-purchasing process for advertisers~\cite{adsenseFAQ}. It, therefore, seems unrealistic to expect platforms to switch to the more complicated mechanisms required to enforce fairness guarantees.

Due to these reasons, we concentrate on the question of designing optimal bidding algorithms for advertisers with group representation constraints.  

\subsection{What Does it Mean to Bid Fairly in Second and First-Price Auctions?}\label{sec:utility-gaps}

Prior works have observed a composition problem that arises in standard auction settings~\cite{dwork2018fairness}. As a simple example, we assume that some advertiser values individuals from groups $A$ and $B$ equally, and so bids the same value $v$ on individuals from each group. 

In a vacuum, such a strategy would result in having a proportional number of ads shown to both groups. However, other advertisers in the market may not have the same goals, and may specifically target one group by only bidding on individuals in group $A$. When composed together, the many bidding strategies used by all the different advertisers in the market may result in different winning bid values for the two groups. In particular, an individual from group $B$ may require a winning bid of $v$, but individuals from $A$ may require a higher bid of $2v$ due to increased demand. In this situation, our advertiser's strategy will result in ads shown only to group $B$, rather than to both groups proportionally. 

The perspective of existing work is that in this example, our advertiser was ``doing the right thing'', i.e. bidding on groups similarly, and it is a failure of the composition mechanism (the auction) that causes differential rates of ad exposure. Instead, we argue that bidding in a way that satisfies group parity constraints might not be the right notion for this context given that practical goals of diverse recruitment are judged based on the auction outcomes. In fact, there are three potential general types of fairness that could be considered here, defined by different parts of the bidding process. We discuss these three options below in terms of group parity guarantees. However, this framing applies to other notions such as individual fairness as well. 
\begin{enumerate}
    \item (\emph{Bid Parity}) The advertiser is required to bid similarly across different groups of interest. This is the notion we considered in our above example, where we saw that bid parity alone does not guarantee that yields will satisfy any sort of group proportionality goals. 
    \item (\emph{Outcome Parity}) Instead, we could explicitly require that an advertiser's yield (measured either in terms of clicks or exposures, depending on the setting) has representation of key groups that is proportional to their representation in the population. Here, we do not put any constraint on how advertisers must bid to achieve a proportional yield. 
    \item (\emph{Bid-and-Outcome Parity}) Lastly, we could potentially consider a stricter notion that requires that both an advertiser's bids be similar across groups \emph{and} the resulting yields be proportional. 
\end{enumerate}
If we care about the outcomes of ad auctions, it's natural to focus on either outcome parity or bid-and-outcome parity as goals for a bidding algorithm. On first glance, these might seem somewhat similar. Clearly, any strategy satisfying bid-and-outcome parity will also satisfy outcome parity, however, we can show that the opposite direction does not necessarily hold. In fact, a simple example demonstrates that strategies satisfying bid-and-outcome parity may result in arbitrarily large decreases in advertiser utility compared to strategies that are only required to satisfy outcome parity in both second and first-price auctions. 
\begin{example}
    We consider a second-price auction\footnote{By similar reasoning, it's easily verified that a first-price auction run in the same setting would result in even larger gaps in utility, so we concentrate on second-price auctions for this example. We also only focus on yield in terms of exposure here for simplicity, but the example can be easily extended to work for yield that is measured in terms of clicks as well by incorporating click-through-rates.} being run on a population partitioned into two groups, $A$ and $B$. 

    We suppose that an advertiser has a budget of $\$5$ that it uses to bid on a population of 100 individuals $(G, w) \in P \subseteq \{A, B\} \times W$, where G corresponds to an individual's group, and $w$ corresponds to the winning bid from a discrete set of bids $W = \{\$0.1, \$0.4, \$1\}$ (if an advertiser bids $b \geq w$, they win the auction and pay $w$, and do not win the auction and pay nothing otherwise). 

    Groups are distributed evenly across the population, so there are 50 individuals from group A and 50 individuals from group B. However, the distributions of winning bids are skewed slightly to the right (higher cost) for individuals from group $A$ compared to group $B$, i.e. we have the following numbers of individuals with each winning bid:

        \begin{center}
        \begin{tabular}{ |c|c|c|c| } 
         \hline
          & w = \$0.1 & w = \$0.4 & w = \$1 \\ 
          \hline
         Group A & 25 & 20 & 5 \\ 
         \hline
         Group B & 40 & 10 & 0 \\ 
         \hline
        \end{tabular}
        \end{center}

    We consider an offline setting where these winning bids and numbers of individuals are all known to an advertiser beforehand and used to set a bidding strategy, and then these 100 individuals arrive in a random order and the bidding strategy is applied until the budget runs out. We consider two options for bidding strategies. First, a bid-constrained strategy is one where an advertiser must set a maximum bid $b$, and bid $b$ on every individual that arrives until the budget runs out (this translates to bidding $b$ on every individual with equal probability because the order is randomized). In Appendix~\ref{sec:bidfairness}, we discuss why this is a natural definition of bid parity in this setting. The second option is to use a bid-unconstrained strategy. In this approach, an advertiser can set a unique bid for each individual.
    
    We assume that an advertiser values all individuals equally, and thus its utility is equal to the number of individuals that are shown an ad. When the outcome is required to be proportional to the group sizes, an advertiser must bid in such a way that the expected number of ads shown to group $A$ is equal to the number of ads shown to group $B$.

    When bids are unconstrained and the advertiser can decide the bid amount for each individual separately as long as the outcomes are proportional to group sizes, it's optimal for an advertiser to bid $\$0.1$ on 25 individuals with $w = \$0.1$ from group $A$ and 25 individuals with $w = \$0.1$ from group $B$, and bid $w=\$0$ on every other individual. This results in ads shown to 50 individuals total, which is equal to the optimal number that could be reached even when outcomes are unconstrained. 

    In contrast, when an advertiser must use a bid-constrained strategy, setting the maximum bid $b$ to be any value smaller than $\$1$ cannot satisfy group proportionality constraints because the expected number of individuals shown ads from group $B$ will always be larger than for group $A$. Thus, the only strategy that satisfies both bid and outcome parity is to bid the maximum-possible bid of $\$1$ on all individuals until the budget runs out. 

    This results in a strategy that shows ads to only $21.3$ individuals in expectation, less than half the utility of the bid-unconstrained strategy. Moreover, note that this strategy provides the lowest utility of \emph{any} of the potential bid-constrained strategies.

\end{example}
    This example exhibits a setting in which requiring bid parity in addition to outcome parity may result in much lower utility for advertisers. We note that this example can be extended to even larger spreads of price distributions where the distribution of group A is slightly skewed right in comparison to group B, again requiring a bid-constrained strategy to bid the maximum possible winning bid of any individual to receive proportional outcomes, whereas a bid unconstrained strategy can satisfy outcome parity while matching the utility of the optimal unconstrained bidding strategy. 

    We conclude that requiring advertisers to bid in a way that respects parity constraints does not directly contribute to receiving group-proportional outcomes, and in some situations may actually make achieving such outcomes incredibly costly compared to strategies where bids are unconstrained. This motivates our interest in optimal bidding strategies that satisfy outcome parity, which we explore in the following sections.
\section{Autobidder with Constraints on Subgroup Representation}\label{sec:offline-alg}

Now that we have justified our perspective and proposed approach, we describe how we choose to model an ad auction from an advertiser's point of view, and how to compute optimal bidding strategies for this setting.

\subsection{Setup}\label{sec:model}

We consider a large set of queries (or individuals) $I$, each of which has a single slot that can show an ad. For each query $i$, an auction determines which ad is shown as well as the cost-per-click ($cpc_i$) of the ad. 

We consider a static setting in which we are trying to set the bid of single advertiser with full knowledge of the bids of other advertisers, i.e. there is a set $cpc_i$ for each query, and the advertiser wins the ad if and only if their bid is above that value. In a first-price auction, the winning bidder pays their bid. In a second-price auction, the winning bidder will pay $cpc_i$. This is a practical assumption in larger markets since $cpc_i$ remains stable. Because our model assumes that we know $cpc_i$ for each individual, the optimal strategies for first-price and second-price auctions are equivalent, because there is no need to bid higher than whatever would be paid in a second-price auction.  

Each query also has an associated click-through-rate ($ctr_i \in [0, 1]$) and value to the advertiser: $v_i \geq 0$. A bidder's goal is to select the optimal set of queries $I^*$ that maximize its expected value $\sum_{i \in I^*} v_i ctr_i$, subject to a set of budget constraints and representation constraints. Budget constraints ensure that the advertiser's expected cost stays below some threshold. Here we will focus on the simplest type of budget constraint that just requires the total expected cost is within a budget $B$: $\sum_{i \in I^*}ctr_icpc_i \leq B$. However, our approach can be extended to more complicated sets of budgetary constraints. 

The second type of constraint we consider is a \emph{group representation constraint}, which allows the advertiser to ensure that the clicks it receives contain sufficient representation from key demographic groups. We allow an advertiser to specify its goal via a set of constraints that require the proportion of clicks from a particular group $g \subseteq I$ to be at least some goal value $\mu_g$, i.e. 
$ \sum_{i \in I^* \cap g}ctr_i \geq \mu_g \sum_{i \in I^*}ctr_i$.

\subsection{Optimal Ad Allocation as a Linear Program}
We express the search for an optimal $I^*$ as described above as an integer linear program, in which the variables $x_i$ correspond to whether or not the advertiser should win the auction for the $i$th slot. We assume that the advertiser's spending is limited by a budget $B$, and we are given a set of groups $G$, where each $g \in G$ is associated with a lower bound on the desired fraction of total clicks that come from group $g$, $\mu_g \in [0, 1]$. For a group $g$, denote $g_i:= \mathbf{1}[i \in g]$ as a binary indicator variable for query $i$'s membership in $g$.
\begin{align}\label{lp:primal}
\begin{array}{ll}
\mbox{maximize} &  \sum_{i}x_ictr_iv_i \\
\mbox{subject to} & \sum_i x_ictr_icpc_i \leq B \\
        & \sum_i x_i ctr_i (\mu_g - g_i) \leq 0,~ \forall g \in G \\
        & x_i \in \{0, 1\}, ~\forall i \in I.
\end{array}
\end{align}
We can relax the above program by allowing $0\leq x_i\leq 1$, where fractional $x_i$s can represent the probability the advertiser should win the auction for slot $i$. We denote it as the relaxed ad allocation linear program.

\begin{theorem}\label{thm:agl}
    Let $P$ be a relaxed ad allocation linear program. Let $\mathcal{V}$ be an bound on the objective value, and for each constraint $c$, let $V_c$ be an upper bound on the violation of constraint $c$. Then for every $\delta>0$, \Cref{alg:solve} outputs a solution $x\in[0,1]^n$ with utility within $\delta\mathcal{V}$ of the optimal utility achievable by the relaxed linear program and violates the each of the constraints with up to $\delta V_C$ additive error. 
\end{theorem}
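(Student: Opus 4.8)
The plan is to recognize the stated guarantee --- additive error $\delta\mathcal V$ in the objective and $\delta V_c$ in each constraint --- as the signature of the multiplicative-weights (Lagrangian) method for approximately solving a linear \emph{feasibility} problem, and to reduce the optimization in \Cref{lp:primal} to that form. First I would convert the maximization into feasibility by treating the objective as an extra constraint: for a target value $V$, ask whether there is an $x\in[0,1]^n$ satisfying the budget inequality, all group inequalities, and $\sum_i x_i\, ctr_i v_i \ge V$. Binary-searching $V$ (or, for the analysis, running the method once at the LP optimum $V^\ast$) reduces everything to the question: given $k=|G|+2$ linear inequalities over the box $[0,1]^n$, either return an approximately feasible point or certify infeasibility. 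This matches how \Cref{alg:solve} maintains a distribution over the constraints and repeatedly queries a simple oracle.

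At the heart of the argument is the standard feasibility analysis. I would maintain a distribution $p^{(t)}$ over the $k$ constraints, and write $m^{(t)}_c$ for the signed violation of constraint $c$ at the current iterate (left-hand side minus right-hand side), normalized by the width $V_c$ (resp.\ $\mathcal V$ for the objective constraint) so that $m^{(t)}_c\in[-1,1]$. At round $t$ the \emph{oracle} minimizes the single aggregated linear functional $x\mapsto \sum_c p^{(t)}_c\, m_c(x)$ over $x\in[0,1]^n$; since this is linear over a box it is solved coordinatewise in $O(n)$ time by setting each $x_i$ to $0$ or $1$ according to the sign of its coefficient. If the minimum is positive, then for every feasible $x^\ast$ we would have $\langle p^{(t)}, m(x^\ast)\rangle>0$, contradicting $m_c(x^\ast)\le 0$, so infeasibility is correctly certified; otherwise the minimizer $x^{(t)}$ satisfies $\langle p^{(t)}, m^{(t)}\rangle\le 0$. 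The weights are then updated multiplicatively from the violation vector $m^{(t)}$.

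The quantitative step combines this per-round oracle guarantee with the Hedge regret bound. Because the normalized losses lie in $[-1,1]$, the regret bound gives, for every constraint $c$, $\frac1T\sum_t m^{(t)}_c \le \frac1T\sum_t \langle p^{(t)}, m^{(t)}\rangle + O\!\left(\sqrt{\ln k / T}\right) \le O\!\left(\sqrt{\ln k/T}\right)$, where the second inequality uses that the aggregated violation is nonpositive every round. Setting $\bar x = \frac1T\sum_t x^{(t)}$ and invoking linearity of each constraint, the (normalized) violation of $\bar x$ at constraint $c$ equals $\frac1T\sum_t m^{(t)}_c$, hence is at most $O(\sqrt{\ln k/T})$; un-normalizing multiplies by $V_c$ (resp.\ $\mathcal V$). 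Choosing $T=\Theta(\ln k/\delta^2)$ drives every normalized error below $\delta$, so $\bar x\in[0,1]^n$ (a convex combination of box points) has objective within $\delta\mathcal V$ of $V^\ast$ and violates each constraint by at most $\delta V_c$, which is exactly the claim.

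I expect the main obstacle to be the bookkeeping that couples the objective guarantee to the feasibility machinery rather than any single inequality. Treating the objective as a constraint is clean, but one must check that the binary search on $V$ (or a single call at the unknown $V^\ast$) does not erode the additive $\delta\mathcal V$ bound, and that the oracle's infeasibility certificate is never triggered at $V=V^\ast$, where a genuinely feasible point exists. The other delicate point is the symmetric treatment of the widths: normalizing \emph{each} constraint by its own $V_c$ and the objective by $\mathcal V$ is what allows a single iteration count $T=\Theta(\ln k/\delta^2)$ to certify the per-constraint error $\delta V_c$ simultaneously, rather than a weaker bound scaling with the largest width. Verifying that the coordinatewise oracle indeed produces losses in $[-1,1]$ after this normalization is the technical linchpin.
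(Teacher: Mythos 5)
Your proposal is correct and follows essentially the same route as the paper: reduce the maximization to a sequence of feasibility problems by folding the objective in as a constraint and sweeping over target values $V$, then run multiplicative weights over the (width-normalized) constraints with a per-round linear oracle over the box $[0,1]^n$, averaging the iterates. Your observation that the oracle is solved coordinatewise by the sign of each coefficient is exactly the content of the paper's own proof, which just re-expresses that sign condition as the bid-threshold test $b(l)\geq cpc_l$ and defers the remaining regret bookkeeping to the standard MW analysis.
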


On \Cref{lem:rand-round} we show that under certain conditions it is possible to have a randomized rounding algorithm satisfying all the constraints with high probability, and in \Cref{lem:det-round} we show that for disjoint groups, there is a deterministic rounding algorithm satisfying the constraints with a small additive error.

We prove the theorem by adapting the multiplicative weights algorithm presented in \cite{aggarwal2019autobidding}, where it was used to solve a linear program with only budgetary constraints. We show that this approach can be modified to work for our setting as well. 

At a high level, the algorithm from \cite{aggarwal2019autobidding} assumes some known rough bounds on the maximal objective value of the linear program, $\mathcal{V}$, and rough bounds on the amount of violation of each constraint. It then searches for the optimal objective by considering candidate objective values $V$ and for each $V$, searching for a solution whose objective value is equal to $V$. The search is done by a multiplicative weights algorithm that solves a series of one-dimensional problems. In this setting, the solution for each of these one-dimensional problems has a closed form in terms of a thresholds $T_i$. The algorithm runs in time $O(n^2/\delta^4|G|)$ to get a $\delta \mathcal{V}$-approximate solution, where $\mathcal{V}$ is a bound on the maximal possible utility value, i.e. $\sum_{i}x_i ctr_i v_i\leq\mathcal{V}$ for every $x$. The algorithm also uses bounds $V_c$ on the constraint violations.

In \Cref{sec:solutions} we show that there exists an equivalent threshold $T_i$ for the linear program with fairness constraints. In \Cref{sec:algorithm} we write the approximation algorithm for fairness constraints and prove its correctness using the adapted threshold. Using our threshold, the multiplicative weights algorithm can solve the $1$-dimensional problem for the linear program with fairness constraints.

\begin{note}
It is important to note that when seeking integer solutions, certain choices of fairness and budget constraints may be so strict that the only feasible solution is one where no bids are made. This can happen even when fairness constraints would be feasible with an unlimited budget, but are too costly to implement with limited funds. 

In such cases, it would be easy for an autobidding algorithm to notify an advertiser that an inputted constraint set is infeasible. There are many potential ways to relax the budget and/or fairness constraints to achieve a non-trivial feasible solution. However, we want to note that which relaxation an advertiser selects should be given careful consideration as to whether it still aligns with the advertiser's goals and does not disproportionately affect any particular group. What constitutes a ``fair'' relaxation of a constraint set and how to find minimal relaxations with these guarantees is an interesting question for future work. 
\end{note}

\subsection{Solutions to the Linear Program}\label{sec:solutions}
In this section we show that all optimal solutions to LPs of the type described in (\ref{lp:primal}) have a specific structure.

As a first step, we write the linear program and its dual, allowing the solution $x$ to be fractional.

\begin{minipage}{0.4\textwidth}
\begin{align}\label{lp:primal-relaxed}
\begin{array}{ll}
\text{maximize} \sum_{i}x_ictr_iv_i \\
\text{s.t. } \sum_i x_ictr_icpc_i \leq B\\
\sum_i x_i ctr_i (\mu_g - g_i) \leq 0, \forall g \in G\\
0 \leq x_i \leq 1, \quad \forall i \in I
\end{array}
\end{align}
\end{minipage}
\begin{minipage}{0.55\textwidth}
\begin{align}
\begin{array}{ll}
\text{minimize} \sum_{i}\delta_i + \alpha B \\
\text{s.t. } \\
 \delta_i + \alpha ctr_i cpc_i  + \sum_{g} \beta_g ctr_i(\mu_g - g_i) \geq ctr_iv_i \label{eq:dual-constraint}\\
\quad \alpha, \delta_i, \beta_g \geq 0, \quad \forall g \in G, i \in I
\end{array}
\end{align}
\end{minipage}

We show that there is an optimal bidding threshold $T_i$ such that if $x_i^* = 1$ in the optimal solution to the LP above, we have $T_i \geq cpc_i$, and if $x_i^* = 0$, we have $T_i \leq cpc_i$. 

Note that if these inequalities were strict (i.e. $T_i < cpc_i$ and not $\leq$), $T_i$ would provide an optimal bidding formula whose outcomes would match that of the optimal solution. For a second-price auction, the bids would consist of exactly $T_i$, while for a first-price, the advertiser should bid $cpc_i$ 9or $cpc_i+\epsilon$ if this is the winning bid) whenever $T_i > cpc_i$. Because the inequalities are not strict, these thresholds are only used as an intermediate step in the algorithm used to solve the linear program (see Appendix \ref{sec:algorithm}). 
\begin{theorem}\label{thm:solutions}
Let $\mathbf{x}^*$ be the optimal solution to \ref{lp:primal-relaxed}, and for each $i \in I$, let $T_i$ be
\begin{equation}\label{eq:bid-threshold}
T_i := \frac{v_i - \sum_{g \in G} \beta_g (\mu_g - g_i)}{\alpha}.
\end{equation} 
Then, $x_i^* = 0$ implies that $T_i \leq cpc_i$, and $x_i^* = 1$ implies $T_i \geq cpc_i$, with the latter inequality strict whenever $\delta_i > 0$.
\end{theorem}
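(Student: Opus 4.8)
The plan is to read off the threshold characterization directly from LP duality and complementary slackness between the relaxed primal \eqref{lp:primal-relaxed} and its dual, whose variables $\alpha$, $\{\beta_g\}$, $\{\delta_i\}$ are the multipliers on the budget constraint, the representation constraints, and the upper bounds $x_i \le 1$ respectively. The dual constraint attached to the primal variable $x_i$ (arising from its sign constraint $x_i \ge 0$) is $\delta_i + \alpha\, ctr_i\, cpc_i + \sum_g \beta_g\, ctr_i(\mu_g - g_i) \ge ctr_i v_i$. I would invoke two complementary-slackness facts: (i) $x_i^* > 0$ forces this dual constraint to hold with equality; (ii) $\delta_i > 0$ forces $x_i^* = 1$, equivalently $x_i^* < 1 \implies \delta_i = 0$. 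Throughout I assume $ctr_i > 0$ and $\alpha > 0$, the latter being both the regime in which the budget binds and exactly the condition that makes $T_i$ in \eqref{eq:bid-threshold} well-defined.

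For the case $x_i^* = 0$: since $x_i^* < 1$, fact (ii) gives $\delta_i = 0$, so dual feasibility of the per-$i$ constraint reduces to $\alpha\, ctr_i\, cpc_i + \sum_g \beta_g\, ctr_i(\mu_g - g_i) \ge ctr_i v_i$. Dividing by $ctr_i > 0$ and solving for $cpc_i$ (using $\alpha > 0$) yields $cpc_i \ge \bigl(v_i - \sum_g \beta_g(\mu_g - g_i)\bigr)/\alpha = T_i$, which is the desired $T_i \le cpc_i$.

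For the case $x_i^* = 1$: since $x_i^* > 0$, fact (i) makes the per-$i$ dual constraint tight, so $\delta_i + \alpha\, ctr_i\, cpc_i + \sum_g \beta_g\, ctr_i(\mu_g - g_i) = ctr_i v_i$. Solving for $cpc_i$ gives $cpc_i = T_i - \delta_i/(\alpha\, ctr_i)$; since $\delta_i \ge 0$ and $\alpha, ctr_i > 0$, this immediately yields $T_i \ge cpc_i$, with the inequality strict exactly when $\delta_i > 0$, as claimed.

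The computation is routine once the primal–dual pairing is pinned down; the care goes into (a) confirming that the displayed dual inequality is genuinely the complementary-slackness partner of $x_i$, so that (i)–(ii) apply to the correct constraints, and (b) the degenerate cases. I expect (b) to be the main thing to handle: queries with $ctr_i = 0$ affect neither the objective nor any constraint and can simply be dropped, while $\alpha = 0$ (a non-binding budget) leaves $T_i$ undefined, so the statement should be understood in the budget-binding regime — the relevant one for bidding, since a slack budget lets the advertiser simply take every profitable query.
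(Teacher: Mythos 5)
Your proof is correct and takes essentially the same route as the paper: both arguments read the threshold off the dual constraint associated with $x_i$ via complementary slackness, using $x_i^* < 1 \implies \delta_i = 0$ plus dual feasibility for the $x_i^*=0$ case and tightness of the dual constraint for the $x_i^*=1$ case. Your extra care about the degenerate cases ($ctr_i = 0$, $\alpha = 0$) is a reasonable addition that the paper handles only implicitly.
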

We prove the theorem via analyzing the complementary slackness conditions of the primal and dual LPs. The proof appears on \Cref{sec:proofs}.

\subsection{Rounding the Solution}\label{sec:rounding}
The solution to this linear program is a vector $x\in [0,1]^n$ that maximizes the objective subject to the given constraints. In this section, we show how to round a fractional solution into an integer solution satisfying the constraints and achieving nearly optimal objective value.

\subsubsection*{Randomized Rounding} One way to interpret the fractional solution $x\in[0,1]^n$ is as a probabilistic solution. That is, for every individual $i\in[n]$, bid $cpc_i$ with probability $x_i$, and else bid $0$. Let $y\in\{0,1\}^n$ be a vector corresponding to a run of this random process, i.e. for every $i$, $y_i\sim\text{Ber}(x_i)$ independently. Let $r_i\in\{0,1\}^n$ be the vector indicating whether an individual clicked on the ad, i.e. for all $i$, $r_i\sim\text{Ber}(ctr_i)$. 
By definition, it means that for all $i$, $\E {y_i}=x_i$ and $\E {r_i}=ctr_i$. 

Since each $r_i$ is a random variable decided by individual $i$, there is an inherent randomness in the outcome and constraint values. Even if we had a deterministic rounding algorithm generating $y$ from $x$, the uncertainty in $r$ does not disappear and we do not get a deterministic expression for the objective and constraints. 
This does not mean that the advertiser would not prefer a stronger guarantee from the solution $y$. For example, the advertiser might want to never exceed the budget.
Given a fractional solution $x$ satisfying certain conditions, we show a randomized rounding algorithm that generates $y$ satisfying all of the constraints with high probability, while only reducing the expected utility by a small factor.
 
 For ease of notation, we say that an ad allocation linear program (\ref{lp:primal-relaxed}) and a solution $x\in[0,1]^n$ are \emph{$\gamma$-flexible} if the set $S_0 = \left\{i\in[n] \left| \sum_{g\in G}g_i=0 \right. \right\}$ satisfies $\sum_{i\in S_0}x_i ctr_i\geq \gamma \sum_{i\in [n]}x_i ctr_i$ and $\sum_{i\in [n]}x_i cpc_i ctr_i\geq\gamma n$.
We remark that if an individual $i$ has $g_i=1$ only for groups $g$ such that $\mu_g=0$, then effectively it is not in any constraint and therefore can be added to $S_0$.

Our rounding algorithm only works for $\gamma$-flexible solutions. We remark that some flexibility in the constraints is required for any rounding algorithm, as can be seen from the following example. Suppose $G = \{g_1,g_2\}$ and that we have two constraints requiring that exactly $1/2$ of the clicks should be from individuals $i\in g_1$ and $1/2$ from $i\in g_2$. Then, because of the inherent randomness in the clicks created by $r_i\sim\text{Ber}(ctr_i)$, it is not possible to promise that both constraints are satisfied with high probability.

\begin{algorithm}
\caption{Randomized rounding algorithm for $\gamma$-flexible linear programm and solution}\label{alg:rand-round}
\KwIn{$x\in[0,1]^n, S_0\subset[n],\epsilon>0$}
\KwOut{$y\in\{0,1\}^n$}
\For {$i=1$ to $n$}
{
$x_i'\gets \begin{cases}
        (1-\epsilon) x_i \quad &i\in S_0\\
        (1-\epsilon/2) x_i \quad &i\notin S_0.
    \end{cases}$\\
$y_i\gets \text{Ber}(x'_i)$
}
\end{algorithm}

\begin{lemma}\label{lem:rand-round}
Let $P$ be an ad allocation linear program, and let $x\in[0,1]^n$ be a fractional solution such that $P,x$ are $\gamma$-flexible and x matches the representation constraint values of the optimal solution to $P$ up to a multiplicative error at most 2. Then for every constant $\epsilon > \gamma$  Algorithm \ref{alg:rand-round} outputs a solution $y\in\{0,1\}^n$, satisfying the following. Let $r$ be the vector representing the individuals clicks, and $\mu = \min_{g\in G}\{\mu_g\}$. Then with probability $1-\exp(-\mu\epsilon^2\gamma^3 n)$ over the randomness of of $y,r$ we have 
\begin{align}
    &\sum_i y_i r_i cpc_i \leq \sum_i x_i ctr_i cpc_i, \label{eq:rand-budget}\\
    &\sum_i y_i r_i (\mu_g-g_i) \leq \sum_i x_i ctr_i  (\mu_g-g_i) \quad \forall g \in G,\label{eq:rand-fairness}\\
     &\E{\sum_i y_i r_i v_i }\geq (1-\epsilon)\sum_i x_i ctr_i v_i. \label{eq:round-obj}
\end{align}
\end{lemma}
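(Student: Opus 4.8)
The plan is to treat the three guarantees separately, exploiting that every quantity in play is a sum of \emph{independent} terms: since the $y_i\sim\text{Ber}(x_i')$ and the $r_i\sim\text{Ber}(ctr_i)$ are drawn independently of one another and across $i$, the products $\{y_i r_i\}_i$ are mutually independent, with $y_i r_i\in\{0,1\}$ and $\E{y_i r_i}=x_i' ctr_i$. The objective bound \eqref{eq:round-obj} is then immediate from linearity of expectation: $\E{\sum_i y_i r_i v_i}=\sum_i x_i' ctr_i v_i\geq(1-\epsilon)\sum_i x_i ctr_i v_i$, using only that $x_i'\geq(1-\epsilon)x_i$ for every $i$ (the most aggressive scaling factor the algorithm applies is $1-\epsilon$). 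This is the lone guarantee stated in expectation, so no concentration is needed for it.

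For the budget \eqref{eq:rand-budget} and fairness \eqref{eq:rand-fairness} inequalities, which must hold with high probability, I would use a common recipe: (i) show that the expectation of the left-hand side, computed under the scaled $x'$, lies strictly below the deterministic right-hand side by a margin $\Delta$; (ii) apply a Bernstein/Chernoff bound to the sum of independent bounded terms to show the left-hand side exceeds its mean by at most $\Delta$ except with probability $\exp(-\Omega(\Delta^2/V))$, where $V$ bounds the total variance; and (iii) union bound over the $1+|G|$ constraints. For the budget this is routine: the uniform factor $1-\epsilon/2$ on every $x_i$ gives $\E{\sum_i y_i r_i cpc_i}\leq(1-\epsilon/2)\sum_i x_i ctr_i cpc_i$, so by the flexibility hypothesis $\sum_i x_i cpc_i ctr_i\geq\gamma n$ the margin is $\Delta=\Omega(\epsilon\gamma n)$; since each term is bounded and the variance is controlled by the mean, a multiplicative Chernoff bound yields failure probability $\exp(-\Omega(\epsilon^2\gamma n))$, comfortably below the target.

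The fairness constraints are the crux, and they are precisely why the algorithm scales $S_0$ by $1-\epsilon$ but the remaining individuals only by $1-\epsilon/2$. Writing $A_g:=\mu_g\sum_{i\in S_0}x_i ctr_i$ and $B_g:=\sum_{i\notin S_0}x_i ctr_i(\mu_g-g_i)$, so that the target is $F_g=A_g+B_g$, the scaled mean is $(1-\epsilon)A_g+(1-\epsilon/2)B_g$, whence the margin is $F_g-\E{\sum_i y_i r_i(\mu_g-g_i)}=\epsilon A_g+(\epsilon/2)B_g$. The subtlety is that group-$g$ members outside $S_0$ contribute $\mu_g-1<0$, so scaling them down \emph{raises} the constraint value; the asymmetric scaling is designed so the slack $\epsilon A_g$ created on the unconstrained mass $S_0$ dominates this effect. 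Both hypotheses enter here: $\gamma$-flexibility gives $A_g=\mu_g\sum_{i\in S_0}x_i ctr_i\geq\mu_g\gamma N$ with $N:=\sum_i x_i ctr_i$, while the assumption that $x$ matches the optimal representation values up to a factor $2$ is what caps the group-$g$ click mass $\sum_i x_i ctr_i g_i$ from above, equivalently bounding how negative $B_g$ can be. This cap does double duty: it keeps the margin positive and of order $\epsilon\mu_g\gamma N=\Omega(\epsilon\mu\gamma^2 n)$, and it simultaneously bounds the variance by $V=O(\mu_g N)$ (the dominant contribution being the group-$g$ terms). A Bernstein bound then gives $\Delta^2/V=\Omega(\epsilon^2\mu\gamma^2 N)=\Omega(\epsilon^2\mu\gamma^3 n)$ per constraint, and a union bound over $g\in G$ and the budget constraint yields the stated $1-\exp(-\mu\epsilon^2\gamma^3 n)$.

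I expect the main obstacle to be step (i) for the fairness constraints: verifying that the expected constraint value under $x'$ is bounded away from $F_g$ by a margin of the correct order. This is exactly where the two structural hypotheses must be combined, and where one must check that scaling down the group members does not erase the slack gained on $S_0$. The concentration and union-bound steps are then standard, but tracking the precise dependence on $\mu,\gamma,\epsilon$—in particular using the factor-$2$ cap to obtain variance $O(\mu_g N)$ rather than $O(N)$—is what lands the exponent on $\mu\epsilon^2\gamma^3 n$ rather than a weaker $\mu^2$-dependence.
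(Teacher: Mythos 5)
Your treatment of the objective (linearity of expectation with $x_i'\geq(1-\epsilon)x_i$) and of the budget constraint (Hoeffding with margin $\Omega(\epsilon\gamma n)$ from the flexibility condition $\sum_i x_i\,cpc_i\,ctr_i\geq\gamma n$) matches the paper's proof. The gap is exactly where you predicted it: step (i) for the fairness constraints fails. Your margin is $F_g-\mathbb{E}\bigl[\sum_i y_ir_i(\mu_g-g_i)\bigr]=\epsilon A_g+\tfrac{\epsilon}{2}B_g$ with $A_g=\mu_g\sum_{i\in S_0}x_i\,ctr_i\leq\mu_g N$ (where $N=\sum_i x_i\,ctr_i$), while $-B_g$ can be as large as roughly $(1-\mu_g)N$ whenever group $g$ is heavily overrepresented in $x$ relative to $\mu_g$. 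Concretely, take $\gamma=0.1$, $\mu_g=0.3$, $\sum_i g_ix_i\,ctr_i=0.6N$, $\sum_{i\in S_0}x_i\,ctr_i=0.1N$: then $A_g=0.03N$, $B_g=-0.33N$, and the margin is $-0.135\epsilon N<0$, i.e.\ the scaled mean lies \emph{above} the target $F_g$, and the literal inequality \eqref{eq:rand-fairness} then fails with constant probability. The hypothesis you lean on to rule this out does not do so: a factor-$2$ match to the \emph{optimal solution's} constraint values does not cap $\sum_i g_ix_i\,ctr_i$ near $\mu_g N$ (the optimum may itself be far from tight). The paper uses that hypothesis in the opposite direction, as the \emph{lower} bound $\sum_i g_ix_i\,ctr_i\geq\tfrac12\mu_g N$, whose only role is to guarantee enough effective Bernoulli mass in the group count so that a relative deviation of $\gamma\epsilon/4$ has failure probability $\exp(-\Omega(\gamma^2\epsilon^2\mu_g N))$ — that is where the $\mu$ in the exponent comes from, not from a variance cap.

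The paper's route avoids the signed sum entirely. It proves two one-sided multiplicative concentration bounds: the group click count satisfies $\sum_i g_iy_ir_i\geq(1-\tfrac{\epsilon}{2}-\tfrac{\gamma\epsilon}{4})\sum_i g_ix_i\,ctr_i$ (the group members are scaled only by $1-\epsilon/2$, leaving $\gamma\epsilon/4$ of downward slack for concentration), and the total click count satisfies $\sum_i y_ir_i\leq(1-\tfrac{\epsilon}{2}-\tfrac{\gamma\epsilon}{4})\sum_i x_i\,ctr_i$ (here the extra $\epsilon/2$ shaved off the $\geq\gamma$-fraction $S_0$ pushes the mean down by an additional $\gamma\epsilon/2$, absorbing the $+\gamma\epsilon/4$ upward concentration slack). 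Since numerator and denominator are both compared to the \emph{same} factor $1-\tfrac{\epsilon}{2}-\tfrac{\gamma\epsilon}{4}$, the ratio is preserved and $\sum_i y_ir_i(\mu_g-g_i)\leq(1-\tfrac{\epsilon}{2}-\tfrac{\gamma\epsilon}{4})\sum_i x_i\,ctr_i(\mu_g-g_i)\leq 0$, i.e.\ the constraint remains satisfied (this is the operative guarantee the paper actually establishes; note it is $\leq 0$ rather than the displayed right-hand side verbatim, which, as the example above shows, is the only version that can hold with high probability). If you want to salvage your additive-margin framing, you would need to first reduce to the normalized/ratio form; as written, the mean comparison that your Bernstein step relies on is false.
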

The lemma implies that if $x\in[0,1]^n$ satisfies the constraints, then with high probability $y$ satisfies them also. If $x$ approximately satisfies the constraints and has some small error $\delta$, then with high probability $y$ approximately the constraints with the same error. The proof appears on \Cref{sec:proofs}.  
\subsubsection*{Deterministic Rounding}
An interesting variant of our autobidding problem is one in which the advertiser pays for individuals to \emph{view} the ad, rather than clicking on it. This can be modeled by the bid allocation LP in (\ref{lp:primal}) by setting $ctr_i=1$ for every $i\in[n]$. In this setting there is no random variable $r$ representing the clicks and thus no inherent randomness in the outcome. Therefore, we have motivation to discuss a deterministic rounding procedure.

We focus on the special case of disjoint groups, where each individual $i$ has $g(i) = 1$ for exactly one $g \in G$ (some groups might not have constraints). The rounding procedure we present results in a deterministic solution that nearly satisfies all constraints and guarantees approximately optimal utility for the advertiser. Our rounding algorithm works for every solution $x\in [0,1]^n$ satisfying the following condition
\begin{align}\label{eq:round-cond}
    \forall g\in G,i,i'\in g \textbf{ such that } x_i,x_{i'}\in(0,1), \quad v_i > v_{i'} \implies cpc_i > cpc_{i'}.
\end{align}
We remark that from the complimentary slackness, the optimal solution satisfies this condition. Furthermore, for $i,i'\in g$ on which the condition does not hold, $i$ is strictly better than $i'$, so we can increase $x_i$ and reduce $x_{i'}$ and get a better solution. More formally, suppose $x\in[0,1]^n$ is a solution that does not satisfy \cref{eq:round-cond} for some $g$ and $i,i'\in g$, then by changing $x_i$ to $\min\{1,x_i+x_{i'}\}$ and $x_{i'}$ to $\max\{0,x_i+x_{i'}-1\}$ we receive a new solution satisfying all constraints as the original solution, and has at least as good objective. Since checking this condition is efficient, we can easily turn every solution into one satisfying the above without hurting the guarantees. 

\begin{lemma}\label{lem:det-round}
Let $P$ be an ad allocation linear program with disjoint groups $G$ and $ctr_i=1$ for all $i\in[n]$, and let $v_{max} = \max_{i\in I}\{v_i\}$. For every $g\in G$, let $S_g =\left\{i\in[n] \left| g_i=1, x_i\in(0,1) \right. \right\} $. For every fractional solution $x\in[0,1]^n$ satisfying the constraints of $P$ and \cref{eq:round-cond}, Algorithm \ref{alg:det-round} applied on every set $S_g$ outputs a solution $y\in\{0,1\}^n$ such that 
\begin{align}
    &\sum_{i\in[n]} y_i cpc_i \leq \sum_{i\in[n]} x_i cpc_i\leq B \label{eq:det-budget}\\
    &\sum_{i\in[n]} y_i g_i + 1\geq \mu_g \sum_{i\in[n]} y_i \quad \forall g \in G,\label{eq:det-fairness}\\
    &\sum_{i\in[n]} y_i  v_i \geq \sum_{i\in[n]} x_i ctr_i v_i - \abs{G}v_{max}. \label{eq:det-obj}
\end{align}
\end{lemma}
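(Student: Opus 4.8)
The plan is to exploit disjointness to turn the statement into an independent per-group rounding problem, and then to use the co-monotonicity guaranteed by \cref{eq:round-cond}. Since the groups are disjoint and Algorithm~\ref{alg:det-round} only re-decides the fractional coordinates $S_g$ (leaving every integral $x_i$ fixed, $y_i=x_i$), the budget $\sum_i y_i cpc_i$ and the objective $\sum_i y_i v_i$ each split as a fixed integral contribution plus a sum of independent per-group contributions, while the fairness constraint for group $g$ couples only the count inside $g$ with the global count. For each group I would write $m_g=\sum_{i\in S_g}x_i$, $C_g=\sum_{i\in S_g}x_i cpc_i$, and $V_g=\sum_{i\in S_g}x_i v_i$, and have the algorithm keep exactly $t_g=\lfloor m_g\rfloor$ of the items of $S_g$. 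The structural fact I would lean on is that, after sorting $S_g$ by increasing $cpc_i$, condition \cref{eq:round-cond} forces the $v_i$ to be non-decreasing as well, so cost and value are co-monotone along this order.

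First I would dispatch the fairness guarantee \cref{eq:det-fairness}, which is the easiest of the three and dictates the rounding direction. Writing $N_g(z)=\sum_i z_i g_i$ and $N(z)=\sum_i z_i$, keeping $t_g=\lfloor m_g\rfloor\le m_g$ items from each $S_g$ changes each group count by $N_g(y)-N_g(x)=t_g-m_g\in(-1,0]$. Hence $N_g(y)+1\ge N_g(x)$ for every $g$, and summing the non-positive changes gives $N(y)\le N(x)$. Combining these with the feasibility of $x$ (which gives $N_g(x)\ge \mu_g N(x)$ since $ctr_i=1$) yields $N_g(y)+1\ge N_g(x)\ge \mu_g N(x)\ge \mu_g N(y)$, which is exactly \cref{eq:det-fairness}. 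The point is that rounding every group \emph{down} keeps the global denominator from growing, so one additive unit of slack per constraint suffices.

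The substance is choosing \emph{which} $t_g$ items of $S_g$ to keep so that \cref{eq:det-budget} and \cref{eq:det-obj} hold at once. I would let the algorithm select the most expensive block that still fits the group budget: among the contiguous windows $T_s=\{s+1,\dots,s+t_g\}$ in the sorted order (whose total cost is monotone in $s$), take the largest $s$ with $\sum_{i\in T_s}cpc_i\le C_g$; by co-monotonicity this window is also the maximum-value feasible $t_g$-subset. The budget bound \cref{eq:det-budget} then holds per group by construction, and since the integral coordinates are untouched and $\sum_g\sum_{i\in T_g}cpc_i\le\sum_g C_g$, it sums to $\sum_i y_i cpc_i\le\sum_i x_i cpc_i\le B$. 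For the objective \cref{eq:det-obj} I would bound the per-group value loss $V_g-\sum_{i\in T_g}v_i$ by $v_{max}$; summing over $g$ produces the $\abs{G}v_{max}$ term. The clean case is when the top $t_g$ items already fit the budget: comparing the indicator of the top block against $x$ and using that the unselected low items carry both smaller value and total mass only $m_g-t_g<1$ gives a loss of at most $v_{max}(m_g-t_g)<v_{max}$.

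The main obstacle is the value bound in the remaining case, where the most expensive affordable window sits strictly below the top and some high-value items are left unselected; here budget and value genuinely pull in opposite directions. The argument must use the budget identity to show that any expensive unselected mass is self-limiting: since $C_g=\sum_{i\in S_g}x_i cpc_i$ already charges for the mass placed on expensive items, a large amount of unselected high-value (hence high-cost) mass would force $C_g$ to be large enough to afford those very items, contradicting the maximality of the chosen window. Formalizing this balance — for instance by an Abel-summation comparison of $x$ against the window indicator in which the cost-feasibility inequality controls the value discrepancy — is the delicate step, and it is exactly where \cref{eq:round-cond} is indispensable. I expect the remaining checks (monotonicity of the window cost in $s$, reduction of the global bounds to the per-group bounds, and the equivalence of the chosen window to the optimal $t_g$-subset) to be routine once this balance is established.
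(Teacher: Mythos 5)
Your fairness argument for \cref{eq:det-fairness} is correct and is essentially the paper's: each group is rounded down by less than one unit of mass, so the group count loses at most $1$ while the global count cannot grow, and feasibility of $x$ does the rest. But the remainder has two problems. First, the lemma is a statement about the specific procedure Algorithm~\ref{alg:det-round}, which orders $S_g$ by increasing value (hence, by \cref{eq:round-cond}, also by increasing $cpc$) and rounds top-down so that every suffix sum is preserved to within one unit, $\sum_{l\geq j}x_{i_l}-1\leq\sum_{l\geq j}y_{i_l}\leq\sum_{l\geq j}x_{i_l}$ for all $j$. You replace this with a different rule (keep the most expensive budget-feasible contiguous window of $\lfloor m_g\rfloor$ items), so even a complete argument would prove a statement about a different algorithm. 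Second, your argument is not complete where it matters: the value bound in the case where the affordable window is not the top window is precisely the step you defer (``formalizing this balance \ldots is the delicate step''), and the sketched ``self-limiting'' budget argument is not a proof. Your selection rule makes the per-group budget bound true by construction and thereby shifts all of the tension onto the objective, which is exactly the part left open.

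The paper's proof shows this tension can be avoided by controlling suffix sums rather than choosing a budget-feasible block. From the suffix-sum invariant, the deficit $\sum_{l\geq j}(x_{i_l}-y_{i_l})$ is nonnegative for every $j$, and since $cpc$ is nondecreasing along the order, a short induction (an Abel summation) gives $\sum_{l\geq j}cpc_{i_l}(x_{i_l}-y_{i_l})\geq cpc_{i_j}\sum_{l\geq j}(x_{i_l}-y_{i_l})\geq 0$, which at $j=1$ yields \cref{eq:det-budget} with no case analysis: the unrounded mass is automatically pushed toward cheaper items. For \cref{eq:det-obj}, after splitting fractional items so that consecutive blocks of $x$-mass exactly $1$ each terminate at a selected index, monotonicity of $v$ along the order bounds each block's fractional value by the value of the selected item ending it, leaving one uncovered block of mass less than $1$ worth at most $v_{max}$ per group. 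If you wish to keep your window formulation you must actually prove the self-limiting claim; the cleaner route is to analyze the algorithm as stated.
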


At a high level, the rounding algorithm round down each group separately. That is, if $y\in\{0,1\}^n$ are the rounded values, then for every group $g$ and every value $v$ we have $\sum_{i\in g,v_i\geq v}y_i\leq \sum_{i\in g,v_i\geq v}x_i$.  See the proof on \Cref{sec:proofs} for more details.

\begin{algorithm}
\caption{Deterministic rounding for a single group $S$}\label{alg:det-round}
\KwIn{$S=\{i_1,\ldots,i_t\},x\in[0,1]^n, v\in\mathbb
{R}^n$}
\KwOut{$y\in\{0,1\}^S$}
Assume that the elements in $S$ are ordered according to $v$, i.e. $v_{i_1}\leq v_{i_2}\cdots\leq v_{i_t}$ and in case of equality by $cpc_i$.\\
\For{$j = t$ to $1$}{
   \If {$x_{i_j} + \sum_{l > j}(x_{i_l} - y_{i_l}) \geq 1 $}{
   $y_{i_j}\gets 1$;
   }
   \Else{$y_{i_j}\gets 0$;}
}
\end{algorithm}

We remark that \Cref{alg:det-round} can also be applied in the case of a few not-disjoint set of groups $G$. In this case, we should run it separately over each possible intersection of the groups, i.e. for every $h\in\{0,1\}^{\abs G}$ run is on $S_h = \left\{ i\in[n] \left| \forall g\in G,  h_g = g_i  \right. \right\}$. In this case, instead of violating each constraint by an additive factor of $1$, we have an additive error of $2^{\abs G}$, the loss to the objective value can be $v_{max} 2^{\abs G}$. Therefore, it only makes sense to apply this algorithm for either disjoint, or very few groups $G$.

\subsection{Extension to Online Bidding}\label{sec:online-ext}
Thus far, our autobidder formulation follows prior work which examines an offline setting \cite{aggarwal2019autobidding}. For a large enough advertisement market, generating bids in an offline setting is sufficient due to the high volume and frequency of slots. However, in settings where advertisement slots may be more sparse and there is a fixed time horizon, generating bids that respect budget and representation constraints can be modeled as an online stochastic optimization problem. We assume $cpc_t$, $ctr_t$, $\{g_t\}_{g \in G}$, and $v_t$ are stochastic, meaning that at each time step $t$, a tuple consisting of these values are drawn i.i.d from some stationary distribution. 

We can then define an objective $f_t(x_t)$ and constraints $c_{t, g}^{(0)}, ..., c_{t, g}^{(3)}$ for each group $g \in G$ to give the optimization problem at the $t$th step 
\begin{align*}\label{eq:online} 
f_t(x_t) &= x_tctr_tv_t \\
c_{t, g}^{(0)}(x_t) &= x_t ctr_t (\mu_g - g_t) \leq 0, \quad \forall g \in G\\
c_{t, g}^{(1)}(x_t) &= x_tctr_tcpc_t - \rho \leq 0\\
c_{t, g}^{(2)}(x_t) &= x_t - 1 \\
c^{(3)}_{t, g}(x_t) &= -x_t 
\end{align*}
where $\rho = \frac{B}{T}$ is the goal amount of budget used at every step and $T$ is the time horizon in consideration (i.e. campaign duration).

Using the algorithm for this problem proposed by \cite{castiglioni2022unifying} guarantees an approximate cumulative constraint satisfaction of $\frac{1}{T}\sum_{t=1}^{T}c_{t, g}^{(i)}(x_t) \le \tilde{O}(T^{-1/4})$ for all $i \in [3]$ and $g \in G$. This means that across $T$ steps, our group representation goals can be approximately achieved. Further, this algorithm also gives an upper bound of $\tilde{O}(T^{-1/4})$ on the regret. While \cite{castiglioni2022unifying} also proposed quota-based fairness constraints, they were unable to apply their algorithm because they could not assume the existence of a feasible solution. In constrast, our ratio-based representation constraints always yield a feasible solution: the zeros vector. Moreover, the existence of a strictly feasible solution implies even better guarantees on the cumulative constraint satisfaction and regret.

\section{Experiments}\label{sec:experiments}
To simulate the problem of an employer looking to advertise to a diverse set of candidates, we use data from the US Bureau of Labor Statistics and the American Community Survey. The American Community Survey is a yearly survey given to a sample of the United States population in order to determine how federal and state funds should be distributed. The survey collects information about employment, housing, education, demographic information, and other topics\footnote{\url{https://www.census.gov/programs-surveys/acs/about.html}}. Using 2021 records of individuals in California from this survey~\cite{ding2021retiring}, we construct cost-per-click based on an individual's income and estimate advertiser value by assigning a higher value for individuals in the same occupational category.
To model the higher cost of advertising to women observed by prior works~\cite{lambrecht2019algorithmic}, we add an additional bump uniformly to the cost per click for women such that the average cost-per-click for women is 10\% higher than men. 
We define click-through rates by assuming an individual is more likely to click on an ad if there are more people similar to themselves in the current occupation. This modeling assumption corresponds to stereotype threat~\cite{beasley2012they}; the negative experience caused by being judged based on a negative group stereotype. Using Labor force summary statistics from 2021 \footnote{\url{https://www.bls.gov/cps/cpsaat11.htm}}, we use the gender and race distributions of occupational categories to approximate the click-through rates for an individual query. To account for the variance across income, demographic, and job categories, we add Gaussian noise to value ($v_i$), cost per click ($cpc_i$), and click-through-rate ($ctr_i$), and clip values to a small range. 

\begin{figure}
    \centering
    \includegraphics[width=\textwidth]{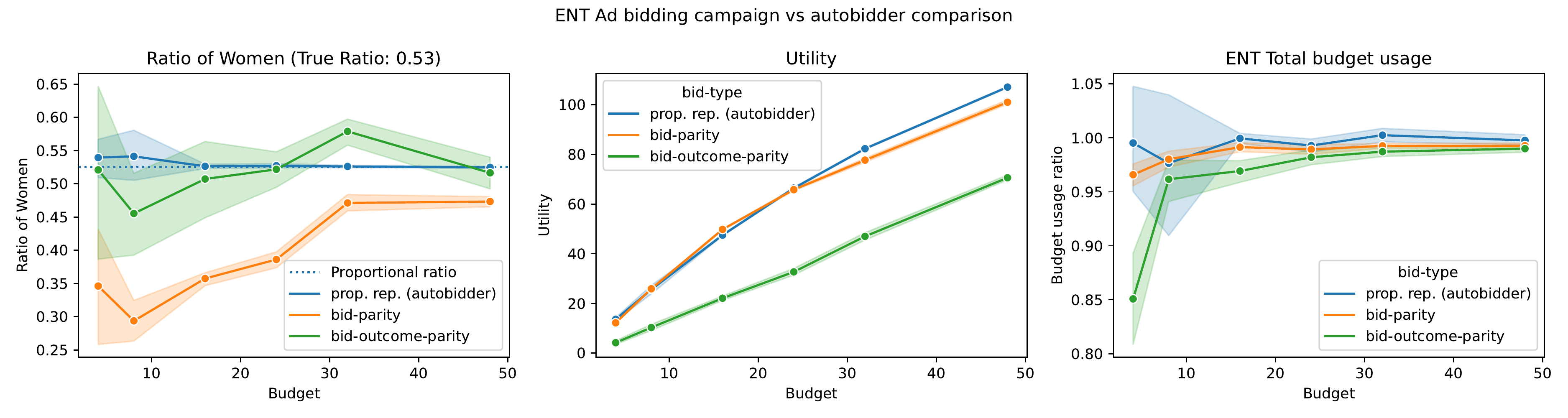}
    \caption{Men and women each represent half of the workforce among \textit{entertainment occupations} workers; we compare the consequences of different fairness objectives in a second price auction. When women cost more to reach than men, using an approach that enforces bid parity guarantees that ads will be shown disproportionately to men; this underrepresentation is particularly stark at a lower budget. Using an auto-bidder with constraints achieves proportional representation while maintaining higher utility than a strategy satisfying bid-and-outcome parity.}
    \label{fig:ent_compare}
\end{figure}

In our experiment setting, we consider a larger pool of viewers both within and outside the target job industry. We set the values of individuals within an industry to be $1.0$ and values for individuals in other occupations to be zero. For each budget, bids are estimated using a disjoint sample from the population that maximizes budget use. We approximate the parity-satisfying bid by finding the $cpc$ threshold in the disjoint population where the budget would become exhausted. For the bid satisfying bid-and-outcome parity, we compare the cumulative distributions of cost-per-click for men and women respectively and find a non-zero intersection point. The cost-per-click at this point reaches a proportional number of men and women. And thus is both bid and outcome fair. While results from previous sections apply to both first and second-price auctions, this set of experiments will be based on second-price auctions. It is easy to see that if we looked at first-price auctions, the bid-parity and bid-and-outcome-parity strategies would be even less efficient in utility with the same budget.     

Figure \ref{fig:ent_compare} compares the bid-parity and bid-and-outcome-parity strategies achieved by a single max bid threshold against our autobidder with proportional group representation constraints in the entertainment industry. This scenario in \textit{entertainment occupations} is motivated by our original example from the introduction, where showing ads to men and women have different costs but men and women appear in the workforce in equal proportion. We see that focusing on bid parity yields a low ratio of women; this effect is especially stark when the total budget is lower. When bid-and-outcome parity is enforced, better representation can be achieved but the utility is strictly lower than the strategy satisfying bid parity. This is because requiring both bid-and-outcome parity results in inefficiency. We apply our autobidder with randomized rounding with parity constraints since parity is equivalent to proportional representation in this industry and plot autobidder candidates for the entertainment industry only. Since the autobidder will use all of the available budget, female candidates not in the entertainment industry may also be selected. Thus while the total number of women candidates is exactly proportional, the number of women in the entertainment industry might be slightly less than proportional. However, our simulations show that the autobidder still achieves representation closer to proportional and yields higher utility than solutions satisfying bid-and-outcome parity. 

\begin{figure}
    \centering
    \includegraphics[width=\textwidth]{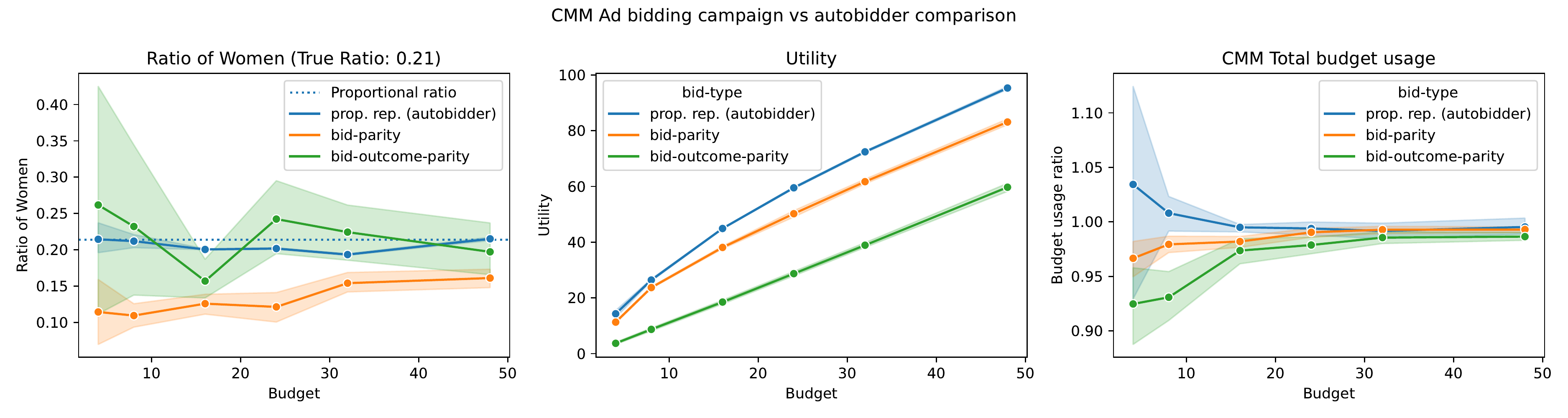}
    \caption{Women only represent 21\% of the workforce in the \textit{computer and mathematical} occupation; we again compare the consequences of various fairness strategies in a second price auction. A bid-parity strategy yields very low female representation but high utility. In contrast, a bid-and-outcome-parity strategy yields proportional representation but lower utility. Meanwhile using an autobidder with proportional constraints yields both good representation and high utility. For a large enough budget, the bid parity and bid-and-outcome parity bids are the same and achieve similar utility and representation.}
    \label{fig:cmm_compare}
\end{figure}

Next, we turn to \textit{computer and mathematical} occupations where women only represent 21\% of workers in our sampled data. Repeating the same process for finding the optimal bid for strategies satisfying bid and bid-and-outcome parity, we can again compare these approaches to our autobidder with proportional representation constraints. Since workers in this industry have much higher incomes, we adjust the minimum cost per click to be slightly higher. In Figure \ref{fig:cmm_compare}, we observe that both our autobidder and the bid-and-outcome-parity strategy achieve better representation than the bid-parity strategy. Comparing utility, we once again observe a significant gap between autobidder and bid-and-outcome-parity utility where employing the autobidder achieves much higher utility. We once again see that the autobidder has higher utility than the bid-parity strategy for the same reason as previously mentioned. Utility-wise, for both occupations, the autobidder always matches or surpasses the bid-parity strategy since some individuals under the threshold may not be the most efficient choices; the autobidder might find a different combination of individuals which maximizes utility that a single threshold cannot achieve. 

In both industries with vastly different baseline demographic compositions, we see that using our autobidder with proportional representation constraints achieves both high levels of representation and utility. For any underrepresented group or intersectional group, we can repeat these examples with similar expected results. If the required level of yield is beyond the population proportion, we can also adjust the target ratio accordingly. 

\section{Discussion}\label{sec:discussion}
Even in the specific setting of group fairness, there are many definitions of fairness and parity that can arise in the advertisement auction and bidding process. We give examples to motivate three potential objectives that have been scattered throughout prior work. 
We discuss what different strategies for achieving each of these goals might look like and give examples of when one notion of fairness (i.e. in bids) might contradict other notions of fairness (e.g. in yield outcome). Our experiments verify the observation from prior work that a strategy satisfying bid parity may result in a lack of diversity when some subgroups are more expensive to advertise to than others. Turning to the bid-and-outcome parity objective, where proportional group representation must be achieved via a bidding strategy that satisfies parity constraints, we show that these additional constraints require much higher bid values to ensure that all populations can be reached. In our simulations, bid-and-outcome parity does achieve better proportional representation than the bid-parity strategy but at the cost of significant utility loss.  

Motivating the case for strategies that satisfy outcome parity, we extend on an existing autobidding framework to include group representation constraints based on the desired ratio of individuals from different groups. Since we use a probabilistic model of cost that is based on click-through rates, we also further modify the autobidder algorithm to satisfy budget and representation constraints with high probability, rather than just in expectation. Incorporating our proposed randomized rounding method that complements our autobidder solution, we show in our experiments that we achieve better outcome fairness than the bid-parity strategy and better utility than the bid-and-outcome-parity strategy. 

In our simplified framework, we assumed that an individual's value to an advertiser can be easily derived based on information about the individual's occupational record. In a real advertising scenario, platforms might have only estimates of viewer employment. Furthermore, there might be systematic biases in terms of missing features like current occupation and income. Designing mechanisms to achieve outcome parity as well a other notions such as individual fairness in the presence of real word data challenges is a promising direction for future work. Furthermore, advertising for job recruitment is just one aspect of recruitment. In reality, a pool of candidates can come from a variety of sources including recruitment events, referrals, job search engines, and direct applications. Each stream of candidates involves different recruitment costs and yield groups with different levels of diversity and skill levels. Exploring composition effects across different sources of recruitment and the underlying network effects that affect which audiences are reached is another interesting direction for future research.

\bibliography{references}
\appendix
\section{Remark on Definitions of Bid Parity}\label{sec:bidfairness}

Throughout this paper, we define a strategy satisfying bid parity as one that selects a single maximum bid $b_{max}$ and bids this value on every member of the target population until the budget runs out. We use this definition because it captures the standard setting in which advertisers can specify their preferences to online advertising platforms by creating a campaign parameterized by a budget, target population, and maximum bid. Moreover, natural relaxations to this strict notion of parity may result in notions that don't guarantee parity with respect to outcomes even in the absence of composition effects. We consider two potential relaxations here to illustrate. 

\subsection{Parity with Respect to Average Bids}

We could imagine a situation in which rather than requiring advertisers bid the same bid with the same probability on all key subgroups, they are instead only required to have the same average bid for each group. 

We show that even in the simplest case where we have two disjoint groups $A$ and $B$ of equal size making up the population and every individual has the same winning bid $w$, only requiring parity with respect to average bids can lead to outcomes where the representation of $A$ and $B$ is far from proportional. 

In particular, consider a strategy that bids $w$ on all individuals from group $A$, while bidding $w - \epsilon$ for some small $\epsilon > 0$ on 90\% of individuals from group $B$, and bidding $w + 9\epsilon$ on the remaining $0.1$\%. For small $\epsilon$ the difference in bids is extremely small, but such a strategy will result in 10x the number of individuals from group $A$ shown ads compared to group B.

\subsection{Approximate Parity}

Similar to above, we might loosen our definition to only require that bids on individuals be close to eachother, i.e. for all individuals $i$ and $j$, we have $|b_i - b_j| < \epsilon$ for some $\epsilon > 0$. 

However, as in our example above, such a constraint can still result in outcomes that are far from proportional even for arbitrarily small values of $\epsilon$. To see how this can occur, consider our example from above where all individuals in $A$ and $B$ have a winning bid of $\epsilon$. One potential strategy in this setting would be to bid $w$ on all individuals from $A$ and $w - \epsilon$ on all individuals from $B$. This results in a strategy that satisfies approximate bid parity constraints, but never shows an ad to an individual from $B$.  

\section{Algorithm for Solving the Linear Program}\label{sec:algorithm}
The bidding algorithm from \cite{aggarwal2019autobidding} can be extended to work with additional group representation constraints. In this section, we explain the bidding algorithm algorithm and prove its correctness when there are additional representation constraints. 

In the algorithm $\delta$ is the approximation parameter, $\mathcal{V}$ is an upper bound on the objective and $V_B,V_G$ are bounds on the value of the budget and group representation constraints. 

At a high level, the algorithm iterates over all possible objective values $V$, and for each value tries to solve the following problem: ``is there an $x$ that satisfies the constraints and has utility $V$?''. This problem can be equivalently restated in matrix form, to ask whether there is an $x$ such that $Ax\geq u$ for the values of $A,u$ described in the algorithm. We use the multiplicative weights algorithm to solve each of these sub-problems. In the update step, the problem is reduced to a problem in $1$-dimension: ``is there an $x$ such that $p^T Ax \geq p^T u$, where $p$ is the weights vector?''. For the $1$-dimensional problem, the optimal threshold described on Section \ref{sec:solutions} is an optimal solution, and therefore can be used for the update.
\begin{algorithm}
\caption{Finding the optimal strategy.}\label{alg:solve}
\KwIn{$\delta>0,\mathcal{V},V_B,V_g \forall g\in G$} 
\KwOut{$\hat x_1,\ldots,\hat x_n\in \{0,1\}$}
\tcc{$\mathcal{V},V_g,V_B,$ are bounds on the objective value and  constraints violations.}
 $T_1 \gets c/\delta$;\\
 $T_2 \gets c/\delta^3$;\\
 $\hat x\gets 0^n$ ;\tcp{output init}
\For{$i=1,\ldots, T_1$} {
$V\gets i\delta\mathcal{V}$;\tcp{$V$ is the current objective we are trying to reach.}
$A \gets \begin{pmatrix}
    ctr_1v_1/\mathcal{V} & ctr_2v_2/\mathcal{V} & \ldots & ctr_nv_n/\mathcal{V} \\
    -ctr_1 cpc_1/V_B & \ldots& \ldots & -ctr_n cpc_n/V_B\\
    ctr_1(g_1 - \mu_g)/V_g &\ldots & \ldots & ctr_n(g_n - \mu_g)/V_g\\
    \vdots & \vdots& \vdots & \vdots
\end{pmatrix}$; 
$u\gets \begin{pmatrix}
    V/\mathcal{V}\\
    -B/V_B \\
    0\\
    \vdots
\end{pmatrix}$;\\
\tcc{MW algorithm solving:  is there $x, 0\leq x_l\leq 1$ such that $Ax\geq u$?} 
$FAIL \gets 0$;\\
$w \gets 1^{2+|G|} $; \tcp{Initialize weights}
\For{$t=1,\ldots,T_2$}{ \tcc{Each iteration solving 1-dim problem:  is there $x, 0\leq x_l\leq 1$ such that $w^T Ax \geq w^T u $?}
$\alpha \gets \frac{w_2\mathcal{V}}{V_B w_1}$;\\
$\beta_g \gets \frac{\mathcal{V}v_B w_g}{w_1 w_2 V_G}$; \tcp{for $g$ the $j$'th group, $w_g = w_{j+2}$}
$b(l) \gets \frac{v_l - \sum_{g\in G}\beta_g(\mu_g - g_l)}{\alpha}$;\\
$x^{(t)}_l \gets \mathbf{1}(b(l)\geq cpc_l) \quad \forall l\in[n]$;\tcp{$x^{(t)}_l$ is the optimal solution to the $1$-dim problem.}
\If {$w^T Ax^{(t)} < w^T u $} {
$FAIL \gets 1$;
}
\Else{
$w_j\gets  \begin{cases}
    w_j\cdot (1-\epsilon)^{A_j x^{(t)} - u_j} \quad &A_j x^{(t)} - u_j\geq 0\\
    w_j\cdot (1+\epsilon)^{-A_j x^{(t)} + u_j} \quad &A_j x^{(t)} - u_j< 0
\end{cases},  \quad \forall j\in[2+|G|]$;
}
}
\If {$FAIL = 0$}{
$\hat x = \sum_{t=1}^{T_2}x^{(t)}$; 
}
}
\end{algorithm}

We state a more formal statement of \Cref{thm:agl} and prove it.
\begin{lemma}\label{lem:solve}
    Let $P$ be a relaxed ad allocation linear program, (\ref{lp:primal-relaxed}). Let $\mathcal{V}$ be an upper bound on the objective value of (\ref{lp:primal-relaxed}) and $V_B,V_G$ be be upper bounds on the amount of violation of the budget and representation constraints. Then for every $\delta>0$, Algorithm \ref{alg:solve} runs in time and $O(n^2/\delta^4|G|)$ and outputs a solution $x\in[0,1]^n$ such that
    \begin{align*}
    &\sum_i x_i ctr_i v_i \geq \textbf{OPT} - \delta\mathcal{V}\\
    &\sum_i x_i ctr_i cpc_i\leq B + \delta V_B\\
    &\sum_i x_i ctr_i  (\mu_g-g_i)\leq \delta V_G \quad \forall g \in G.      
    \end{align*}
\end{lemma}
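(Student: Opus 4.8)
The plan is to follow the blueprint of \cite{aggarwal2019autobidding}: reduce the optimization LP to a sequence of \emph{feasibility} tests, solve each test with multiplicative weights, and use the threshold from \Cref{thm:solutions} as the per-iteration oracle. First I would handle the outer loop. Optimizing $\sum_i x_i\, ctr_i v_i$ to additive accuracy $\delta\mathcal{V}$ reduces to deciding, for each candidate value $V \in \{\delta\mathcal{V}, 2\delta\mathcal{V}, \ldots, \mathcal{V}\}$, the feasibility question ``is there $x \in [0,1]^n$ with objective at least $V$ that also meets the budget and all group constraints?'' Since the objective is bounded by $\mathcal{V}$, only $T_1 = O(1/\delta)$ values of $V$ need testing, and the largest feasible $V$ lies within $\delta\mathcal{V}$ of $\textbf{OPT}$. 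Writing the objective lower bound, the negated budget bound, and the group bounds as the rows of the normalized matrix $A$ and target vector $u$ defined in \Cref{alg:solve}, this question is exactly ``does there exist $x \in [0,1]^n$ with $Ax \geq u$?''

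Next I would solve each feasibility test by multiplicative weights over the $2+\abs{G}$ rows of $A$, treated as the constraints/experts. The normalization of each row by $\mathcal{V}$, $V_B$, and $V_g$ bounds the entries of $A$ (and hence the width of the MW instance), which is what makes the iteration count independent of the problem data. The key point, and the place where \Cref{thm:solutions} enters, is the MW update oracle: each inner iteration must solve the one-dimensional problem $\max_{x \in [0,1]^n} w^{\top} A x$ to check whether the weighted combination can reach $w^{\top} u$. Because this objective is linear and separable over the box, its maximizer sets $x_l = 1$ exactly when the coefficient of $x_l$ is positive; a short calculation shows that, under the scaling $\alpha = \frac{w_2 \mathcal{V}}{V_B w_1}$ and $\beta_g$ as in the algorithm, this coefficient is positive precisely when $cpc_l \leq b(l)$ with $b(l)$ the bidding threshold of \Cref{thm:solutions}. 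Thus the rule $x_l^{(t)} = \mathbf{1}(b(l) \geq cpc_l)$ is an \emph{exact} optimal solution to the one-dimensional problem and is evaluable in $O(n\abs{G})$ time.

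I would then invoke the standard MW regret guarantee to turn the iterations into an approximate feasibility certificate. If no inner iteration triggers $FAIL$, the averaged iterate $\hat x = \frac{1}{T_2}\sum_{t} x^{(t)}$ satisfies every row up to a per-coordinate regret term $\eta$, i.e.\ $A\hat x \geq u - \eta$; with the algorithm's choices ($T_2 = O(1/\delta^3)$ together with the multiplicative step $\epsilon$) this regret is at most $\delta$, which after undoing the normalization yields objective at least $V - \delta\mathcal{V}$, budget at most $B + \delta V_B$, and each group violation at most $\delta V_G$. Conversely, if some iteration triggers $FAIL$ (the one-dimensional optimum falls short of $w^{\top} u$), then the current weight vector $w$ exhibits a nonnegative combination of rows that no $x \in [0,1]^n$ can satisfy, certifying that the candidate $V$ is infeasible; hence the outer loop never discards a genuinely feasible objective value. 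Combining, the last non-$FAIL$ iterate gives the three claimed inequalities, and the running time is $T_1 \cdot T_2 = O(1/\delta^4)$ MW iterations, each costing $O(n\abs{G})$ for the threshold computation and the constraint evaluations, matching the stated bound.

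The main obstacle I expect is the bookkeeping around width and normalization: one must verify that the entries of the normalized $A$ are bounded by an absolute constant so that MW converges in $O(1/\delta^3)$ steps with a regret that scales cleanly to the additive targets $\delta\mathcal{V}$, $\delta V_B$, $\delta V_G$, and simultaneously confirm that the $FAIL$ certificate is sound (a single short inner iteration genuinely rules out feasibility of the candidate $V$, rather than merely reflecting a transiently bad weight vector). The remaining effort, namely the outer-loop granularity argument and the per-iteration oracle via \Cref{thm:solutions}, is comparatively routine once the scaling of $\alpha$ and $\beta_g$ is matched to the threshold formula.
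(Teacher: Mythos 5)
Your proposal is correct and follows essentially the same route as the paper: an outer search over candidate objective values $V$ in increments of $\delta\mathcal{V}$, an inner multiplicative-weights feasibility test for $Ax\geq u$, and the key observation that under the scaling of $\alpha$ and $\beta_g$ the rule $x_l^{(t)}=\mathbf{1}(b(l)\geq cpc_l)$ exactly maximizes the separable linear objective of the one-dimensional subproblem. The paper's own proof is in fact terser than yours, carrying out only the coefficient calculation $C_l\geq 0 \iff b(l)\geq cpc_l$ and deferring the regret and certificate arguments to the standard MW framework, so your additional bookkeeping on normalization and the soundness of the $FAIL$ certificate is consistent with, and slightly more explicit than, what appears in the paper.
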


\begin{proof}
To prove the correctness of the algorithm, it is enough to prove that the $x_l^{(t)}$ assigned is indeed the optimal solution for the $1$-dimensional problem. The rest is implied from the correctness of the multiplicative weight algorithm, see \cite{arora2012multiplicative}.
Therefore, we prove that $x^{(t)}_l$ is the optimal solution to the $1$-dimensional problem $\max_x\{w^TAx-w^Tu\}$.
\begin{align*}
    w^T Ax - w^T u =& w_1\sum_{l=1}^n\frac{v_\ell}{\mathcal{V}}x_l - w_2 \sum_{l=1}^n\frac{ctr_l cpc_l}{V_B}x_l\\&+\sum_{j=3}^{|G|+2}w_j\sum_{l=1}^n \frac{ctr_l(g_l-\mu_g)}{V_g}x_l - w_1 \frac{V}{\mathcal{V}} + w_2\frac{B}{V_B}\\
    =&\sum_{l=1}^n x_l \left(w_1\frac{v_\ell}{\mathcal{V}}-w_2\frac{ctr_l cpc_l}{V_B} + \sum_{j=3}^{|G|+2}w_j \frac{ctr_l(g_l-\mu_g)}{V_g}\right)- w_1 \frac{V}{\mathcal{V}} + w_2\frac{B}{V_B}.
\end{align*}
Denote $C_l = w_1\frac{v_\ell}{\mathcal{V}}-w_2\frac{ctr_l cpc_l}{V_B} + \sum_{j=3}^{|G|+2}w_j \frac{ctr_l(g_l-\mu_g)}{V_g}$. 
The maximal value of $w^T Ax - w^T u $ is given when in every $l$ such that $C_l \geq 0$ we have $x_l=1$, and for the rest we have $x_l=0$.

Notice that after switching $\alpha,\beta_g$ (which we can think about just as renaming of $w_j$) we have that $c_l  \geq 0$ is equivalent to $b(l)\geq cpc_l$, as
\begin{align*}
    C_l =& w_1\frac{ctr_lv_l}{\mathcal{V}} - \frac{w_2}{V_B} ctr_l cpc_l + \sum_g \frac{w_g}{V_g} ctr_l(g_l-\mu_g) \geq 0 \iff\\
    cpc_l\leq& \frac{V_B}{w_2}\left(\frac{ w_1 v_l}{\mathcal{V}}+ \sum_g\frac{w_g}{V_g} (g_l-\mu_g)   \right)
\end{align*}
If we denote $\alpha = \frac{w_2}{V_B}\frac{\mathcal{V}}{w_1}$ and $\beta_g = \frac{\mathcal{V}v_B w_g}{w_1 w_2 V_G}$ then we have that this is the same as $b(l)\geq cpc_l$.
The algorithm solves the $1$-dimensional problem $)(1/\delta^4)$ times, each takes $O(|G|n^2)$ time.
\end{proof}
\section{Proofs of Theorems and Lemmas}\label{sec:proofs}
\begin{proof}[Proof of \Cref{thm:solutions}]
First, suppose we have an $i$ such that $x_i^* = 0$. By the slackness conditions of the LPs, this implies that $\delta_i = 0$. Substituting this fact into constraint \ref{eq:dual-constraint} of the dual tells us that we must have 
\[
\alpha ctr_i cpc_i + \sum_{g \in G}\beta_g ctr_i(\mu_g - g_i) \geq ctr_i v_i
\]

Rearranging the terms of this inequality (and assuming $ctr_i \neq 0$), we get 
\[
cpc_i \geq \frac{v_i - \sum_{g \in G}\beta_g(\mu_g - g_i)}{\alpha}  = T_i
\]

as desired. For the other direction, suppose that $x_i^* = 1$. Again applying complementary slackness, we know that constraint \ref{eq:dual-constraint} must be tight, and thus 
\[
\delta_i + \alpha ctr_i cpc_i + \sum_{g \in G}\beta_g ctr_i(\mu_g - g_i) = ctr_i v_i.
\]
Again rearranging to solve for $cpc_i$, we get:
\[
cpc_i = \frac{v_i - \sum_{g \in G}\beta_g(\mu_g - g_i)}{\alpha} - \frac{\delta_i}{ctr_i \alpha} = T_i - \frac{\delta_i}{ctr_i \alpha}.
\]

We can conclude that this guarantees $T_i \geq cpc_i$, and if $\delta_i > 0$, then $T_i > cpc_i$.
\end{proof}

\begin{proof}[Proof of \Cref{lem:rand-round}]
    Given a $\gamma$-flexible solution $x$, let $y$ be the output of \Cref{alg:rand-round}, and let $x'\in[0,1]^n$ be as in  \Cref{alg:rand-round}. We show that all of the constraints hold with high probability. We denote the realization of clicks from each individual as $r$, i.e.  $r_i\sim\text{Ber}(ctr_i)$.

For the budget constraint, we show that (\ref{eq:rand-budget}) holds with high probability,
\begin{align*}
    \Pr_{y,r}\left[\sum_{i\in [n]}{y_i r_i cpc_i}\geq B\right]&\leq  \Pr_{y,r}\left[\sum_{i\in S}{y_i r_i cpc_i}\geq \left(1+\frac{\epsilon}{2}\right)\sum_{i\in S}x'_i ctr_i cpc_i\right]\leq e^{-\epsilon^2 \gamma^2\left(1-\frac{\epsilon}{2}\right)^2 \frac{n}{4}},
\end{align*}
where the last inequality is due to Hoeffding's inequality.

For the representation constraints, (\ref{eq:rand-fairness}), we have that for every $g\in G$,
\begin{align}\label{eq:g-rep}
    &\Pr_{y_i,r_i}\left[\sum_{i\in [n]}{ g_i y_i r_i }\leq \left(1-\frac{\epsilon}{2}-\frac{\gamma\epsilon}{4}\right)\sum_{i\in[n]}x_i ctr_i g_i\right]\\&~\leq \Pr_{y_i,r_i}\left[\sum_{i\in [n]}{ g_i y_i r_i }\leq \left(1-\frac{\gamma\epsilon}{4}\right)\sum_{i\in[n]}x'_i ctr_i g_i\right]\leq e^{-\frac{\gamma^2\epsilon^2}{32} \sum_{i\in[n]} g_i x_i ctr_i}.
\end{align}
\begin{align}\label{eq:total-rep}
    &\Pr_{y,r}\left[\sum_{i\in [n]}{y_i r_i }\geq \left(1-\gamma\epsilon - (1-\gamma)\frac{\epsilon}{2} +\frac{\gamma\epsilon}{4}\right)\sum_{i\in [n]}{x_i ctr_i }\right] \\&~\leq \Pr_{y,r}\left[\sum_{i\in [n]}{y_i r_i }\geq \left(1+ \frac{\gamma\epsilon}{4}\right)\sum_{i\in [n]}{x'_i ctr_i }\right]\leq e^{-\frac{\gamma^2\epsilon^2}{32} \sum_{i\in[n]}x_i ctr_i}.
\end{align}
The solution $x$ satisfies the constraint up to a constant error of $2$, so  $\sum_{i\in[n]} g_i x_i ctr_i \geq 1/2\cdot \mu_G \sum_{i\in[n]} x_i ctr_i$. Therefore the bound in both (\ref{eq:g-rep}) and (\ref{eq:total-rep}) is at most $exp(-\gamma^3\epsilon^2\mu_g n)$.
If the events in (\ref{eq:g-rep}) and (\ref{eq:total-rep}) do not hold, then the representation constraint on group $g$ is satisfied, as we have that 
\begin{align*}
    &\mu_g\sum_{i\in[n]} y_i r_i \leq \left(1-\frac{\epsilon}{2}-\frac{\gamma\epsilon}{4}\right)\mu_g \sum_{i\in[n]} x_i ctr_i,\quad 
    \sum_{i\in[n]}g_i y_i r_i \geq \left(1-\frac{\epsilon}{2}-\frac{\gamma\epsilon}{4}\right)\sum_{i\in[n]}{g_i x_i ctr_i}
\end{align*}

By union bound over all group representation constraints for $g\in G$ and over the budget constraint, with probability $1-\exp(-\mu \gamma^2 \epsilon^2 n)$ all constraints hold.

We are left with showing that the objective is not reduced by much. We notice that $\forall i\in[n], x_i'\geq(1-\epsilon)x_i$, so from the linearity of expectation we get (\ref{eq:round-obj}).

\end{proof}

\begin{proof}[Proof of \Cref{lem:det-round}]
    Let $S =S_g$ for some $g\in G$.  Let $i_1,\ldots,i_t$ be the order of the elements in $S$ used by the algorithm. From (\ref{eq:round-cond}), this order is also an order by $cpc_{i}$.

    From the algorithm, we have that for every $j\in[t]$,
    \begin{align}
       \sum_{l\geq j}x_{i_l}-1\leq \sum_{l\geq j}y_{i_l}\leq \sum_{l\geq j}x_{i_l}.\label{eq:round-b}
    \end{align}
    
    For the budget constraint, (\ref{eq:det-budget}), 
    we claim that for every $j\in[t]$,
    \begin{align}
        \sum_{l\geq j}cpc_{i_l}(x_{i_l}-y_{i_l}) \geq cpc_{i_j}\sum_{l\geq j}(x_{i_l}-y_{i_l}).\label{eq:neg-price}
    \end{align}
    We prove it by induction on $j$, starting from $j=t$. The basis is implied from (\ref{eq:round-b}). The step,
    \begin{align*}
    \sum_{l\geq j}cpc_{i_l}(x_{i_l}-y_{i_l}) =& cpc_{i_j}(x_{i_j}-y_{i_j}) + \sum_{l >  j}cpc_{i_l}(x_{i_l}-y_{i_l})  \\\geq& 
    cpc_{i_l}(x_{i_l}-y_{i_l}) + cpc_{i_{j+1}}\sum_{l >  j}(x_{i_l}-y_{i_l}) \tag{Inductive step}\\ \geq& 
    cpc_{i_l}(x_{i_l}-y_{i_l}) + cpc_{i_{j}}\sum_{l >  j}(x_{i_l}-y_{i_l}).
    \end{align*}
    Where in the last inequality we use the facts that $cpc_{i_{j}}\leq cpc_{i_{j+1}}$ and $\sum_{l >  j}(x_{i_l}-y_{i_l})\geq 0$.
    Applying (\ref{eq:neg-price}) with $j=1$ and using (\ref{eq:round-b}) implies that $\sum_{i\in S}y_i  cpc_i \leq \sum_{i\in S}x_i  cpc_i$, and in general $\sum_{i\in [n]}y_i  cpc_i \leq \sum_{i\in [n]}x_i  cpc_i$, proving (\ref{eq:det-budget}).

    For the representation constraint, we have from (\ref{eq:round-b}) that for every $g\in G$, $\sum_{i\in[n]}y_i g_i \geq \sum_{i\in[n]}x_i g_i -1 $.
    By summing up on all $S$, we get that $\sum_{i\in[n]}y_i \leq \sum_{i\in[n]}x_i.$
    Together with the fact that $x$ satisfy the representation constraint we get
    \[ \sum_{i\in[n]}y_i g_i +1 \geq \sum_{i\in[n]}x_i g_i \geq \mu_g \sum_{i\in[n]}x_i \geq \mu_g \sum_{i\in[n]}y_i . \]
    Therefore,  $y$ satisfy (\ref{eq:det-fairness}) for every group $g$.

   For the objective value, (\ref{eq:det-obj}), we fix a set $S$ and let $i_1,\ldots, i_t$, be the order used in the algorithm. To simplify the proof, we ``split'' elements in $S$ and divide their $x_i$ in the following way: if we have $y_{i_j}=1$ because $x_{i_j} + \sum_{l > j}(x_{i_l} - y_{i_l}) > 1 $, then we split $i_j$ to two elements $i,i'$ with $x_i = 1-\sum_{l > j}(x_{i_l} - y_{i_l})$ and $x_{i'} = x_{i_j}-x_i$. This ``splitting'' is for analysis only, and we abuse notation by denoting $S=\{i_1,\ldots i_t\}$ also after the splitting. After the splitting we have that if $y_{i_j}=1$ then $\sum_{l\geq j}x_l = \sum_{l\geq j} y_l$. 
    
     Let $j_1,\ldots j_k\in [t]$ be the indices in which $y_j=1$. We have that for every $m\in [k]$, $\sum_{l=j_m}^{j_{m+1}-1}x_l=1$, and also $\sum_{l\geq j_k}x_l=1$ and $\sum_{l< j_1}x_l<1$.
    Therefore,
    \begin{align*}
        \sum_{l\in[t]}v_{i_l}x_{i_l} = &\sum_{l=1}^{j_1-1} v_{i_l}x_{i_l} + \sum_{l=j_1}^{j_2-1} v_{i_l}x_{i_l} + \cdots+ \sum_{l=j_m}^{t} v_{i_l}x_{i_l}\\ \leq&
        v_{i_{j_1}}\sum_{l=1}^{j_1-1}x_{i_l} + v_{i_{j_2}}\sum_{l=j_1}^{j_2-1} x_{i_l} + \cdots +v_{i_t}\sum_{l=j_m}^{t}x_{i_l} \tag{$v_i$ are increasing}\\ \leq&
        v_{i_{j_1}} + v_{i_{j_2}} + \cdots + v_{i_t} \leq v_{i_t}+ \sum_{l\in[t]}y_{i_l} v_{i_l},
    \end{align*}
    which proves (\ref{eq:det-obj}).
\end{proof}

\end{document}